\newcommand{\newterm}[1]{\emph{#1}}  
\newcommand{\algo}[1]{\textbf{#1}}   
\newcommand{\maa}{\overline{A}} 
\def\calF{{\mathcal{F}}}
\def\calP{{\mathcal{P}}}
\newcounter{foo}
\newtheorem{theorem}[foo]{Theorem}
\newtheorem{lemma}{Lemma}[section]
\newtheorem{claim}[lemma]{Claim}
\begin{document}

\title[Distributed Scheduling in the SINR Model]{Nearly Optimal Bounds for Distributed Wireless Scheduling in the SINR Model}
\thanks{Supported by grants 90032021 and 120032011 from the Icelandic Research Fund. Preliminary version appeared in ICALP 2011.} 

\author{Magn\'us M. Halld\'orsson}
\address{ICE-TCS, School of Computer Science,
              Reykjavik University, Iceland.}
\email{mmh@ru.is, ppmitra@gmail.com}
\author{Pradipta Mitra}
\date{\today}

\begin{abstract}
We study the wireless scheduling problem in the SINR model. More
specifically, given a set of $n$ links, each a sender-receiver pair,
we wish to partition (or \emph{schedule}) the links into the minimum
number of slots, each satisfying interference constraints allowing
simultaneous transmission.  In the basic problem, all senders transmit
with the same uniform power.

We give a distributed $O(\log n)$-approximation algorithm for the
scheduling problem, matching the best ratio known for centralized
algorithms. It holds in arbitrary metric space and for every
length-monotone and sublinear power assignment. It is based on an
algorithm of Kesselheim and V\"ocking, whose analysis we improve by a
logarithmic factor.  We show that every distributed algorithm uses
$\Omega(\log n)$ slots to schedule certain instances that require only
two slots, which implies that the best possible absolute performance
guarantee is logarithmic.
\end{abstract}

\maketitle

\section{Introduction}
Given a set of $n$ wireless links, each a sender-receiver pair, what is the minimum number of slots needed to schedule
all the links, given interference constraints? This is the canonical problem of \emph{scheduling}
wireless communication, which we study here in a distributed setting.

In a wireless network, simultaneous transmissions on the same channel interfere with each other.
Algorithmic questions for wireless networks depend crucially on the model of interference considered.
In this work, we use the physical, a.k.a.\ \emph{SINR},  model of interference,
precisely defined in Section \ref{sec:model}.
It is known to capture reality more faithfully than the graph-based
models most common in the theory literature, as shown
theoretically as well as
experimentally~\cite{MaheshwariJD08,Moscibroda2006Protocol}. 
Early work on scheduling in the SINR model focused on heuristics
and/or non-algorithmic average-case analysis (e.g.~\cite{kumar00}).  In seminal work, Moscibroda and Wattenhofer \cite{MoWa06}
proposed the problem of scheduling an arbitrary set
of links. Numerous works on various problems in the SINR setting have appeared since.

The \emph{scheduling} problem 
has primarily been studied in a centralized setting. In many realistic scenarios, however,
it is imperative that a distributed solution be found, since a
centralized controller may not exist, and individual nodes in the link
may not be aware of the overall topology of the network. For the
scheduling problem, the only rigorous result previously known is due to
Kesselheim and V\"{o}cking \cite{KV10}, who show that a simple and 
natural distributed algorithm provides an $O(\log^2 n)$-approximation.

In this work, we adopt the algorithm of Kesselheim and V\"{o}cking,
but provide an improved analysis of an $O(\log n)$-approximation. This
matches the best upper bound known for centralized algorithms. Moreover, we show this to be best possible for distributed algorithms that use no external communication infrastructure.


\section{Preliminaries and Contributions}
\label{sec:model}

Given is a set $L = \{l_1, l_2, \ldots, l_n\}$ of links, where
each link $l_v$ represents a communication request from a sender
$s_v$ to a receiver $r_v$. 
The distance between two points $x$ and $y$ is denoted by $d(x,y)$.
The asymmetric distance from link $l_v$ to link $l_w$ is the distance from
$v$'s sender to $w$'s receiver, denoted by $d_{vw} = d(s_v, r_w)$.
Let $\ell_v = d(r_v,s_v)$ denote the length of link $l_v$.


Let $P_v$ denote the power assigned to link $l_v$, or, in other words, $s_v$ transmits with power $P_v$. 
We adopt the SINR model (a.k.a., \emph{physical model})
of interference, in which a node $r_v$
successfully receives a message from a sender $s_v$ if and only if the
following condition holds:
\begin{equation}
 \frac{P_v/\ell_v^\alpha}{\sum_{l_w \in S \setminus  \{l_v\}}
   P_w/d_{wv}^\alpha + N} \ge \beta, 
 \label{eq:sinr}
\end{equation}
where $N$ is a universal constant denoting the ambient noise, 
$\alpha > 0$ denotes the
path loss exponent, $\beta > 0$ denotes the minimum
SINR (signal-to-interference-noise-ratio) required for a message to be successfully received,
and $S$ is the set of concurrently scheduled links in the same \emph{slot}.
%
We say that $S$ is \emph{SINR-feasible} (or simply \emph{feasible}) if (\ref{eq:sinr}) is
satisfied for each link in $S$. 

A power assignment $\calP$ is \newterm{length-monotone} if $P_v \ge P_w$
whenever $\ell_v \ge \ell_w$ and \newterm{sub-linear} if
$\frac{P_v}{\ell_v^{\alpha}} \le \frac{P_w}{\ell_w^{\alpha}}$ whenever
$\ell_v \ge \ell_w$ \cite{KV10}. Two widely used power assignments in
this class are the \emph{uniform} power assignment, where every link
transmits with the same power; and the \emph{linear} power assignment,
where $P_v$ is proportional to $\ell_v^\alpha$. A third one, \emph{mean} power \cite{FKRV09,us:talg12} has also proved to be versatile.

Given a set of links $L$, the \newterm{scheduling problem} is to find
a partition of $L$ of minimum size such that each subset in the
partition is feasible. The size of the partition equals the minimum
number of slots required to schedule all links. We will call this
number the \newterm{scheduling number} of $L$, and denote it by $\chi(L)$
(or $\chi$ when clear from context).

\paragraph{Distributed algorithms.}

A communication infrastructure for running distributed algorithms is
generally assumed to exist in the traditional distributed setting.
The current setting, which abstracts the MAC layer in networks, is
different, as the goal actually is to construct such an
infrastructure. Thus, our algorithm will work with very little global
knowledge and minimal external input. 

Communication is only available over the channel.
Algorithms operate in synchronous rounds with the senders either
transmitting or listening in each round.
When transmission is successful, the sender stops transmitting.
This necessitates an acknowledgment
from the receiver, so that the sender knows when his message has been
heard. These acknowledgments are sent over the same channel as the
message; thus, there are no side-channels for control messages. 
We shall assume this model, which we call \emph{ack-only}, in the rest of the paper.

We assume that nodes have a rough estimate of the network size $n$ 
and (senders of) links are assigned a fixed
length-monotone, sublinear power function.
The power assignment indirectly requires knowledge of distances and
the path loss constant $\alpha$ and the technological parameters
$\beta$ and $N$. No information of locations is
needed.

We note that the assumptions are particularly minimal when
using uniform power.  The algorithm then needs no knowledge of
distances, the path loss constant $\alpha$, nor the technological
parameters $\beta$ and $N$.  Only the polynomial bound on the number
$n$ of nodes is needed.

\paragraph{Affectance.}
We will use the notion of \emph{affectance}, introduced in
\cite{GHWW09,HW09} and refined in \cite{KV10} to the thresholded form
used here.
The affectance $a^\calP_w(v)$ \emph{on} link $l_v$ \emph{from} another link $l_w$,
with a given power assignment $\calP$,
is the interference of $l_w$ on $l_v$ relative to the power
received, or
  \[ a^\calP_{w}(v) 
     = \min\left\{1, c_v \frac{P_w/d_{wv}^\alpha}{P_v/\ell_v^\alpha}\right\}\ ,
  \] 
where $c_v = \beta/(1 - \beta N \ell_v^\alpha/P_v)$ 
depends only on model constants and on the length of $l_v$.
We will drop $\calP$ and assume it to be an arbitrary length-monotone
sub-linear power strategy, unless otherwise stated.
Let $a_v(v) = 0$.
For a set $S$ of links and a link $l_v$, 
let $a_S(v) = \sum_{l_w \in S} a_w(v)$, referred to as \emph{in-affectance},
and $a_v(S) = \sum_{l_w \in S} a_v(w)$, the \emph{out-affectance} from $l_v$.
For sets $S$ and $R$, $a_R(S) = \sum_{l_v \in R}\sum_{l_u \in S} a_v(u)$.
Using such notation, (\ref{eq:sinr}) can be rewritten as 
\begin{equation}
a_S(v) \leq 1 \ ,
\label{eq:feasaff}
\end{equation}
whenever $|S| > 2$, and this is the form we will use.

\subsection{Related Work}

In the centralized setting, scheduling results have closely followed
results on the related \emph{capacity} problem, where one wants to
find the maximum subset of $L$ that can be transmitted in a single
slot). Goussevskaia et al.~\cite{gouss2007} showed the problem to be 
NP-hard for the case of uniform power on the plane and gave $O(\log \Delta)$-approximation result (on the plane), where $\Delta$
denotes the ratio between the maximum and minimum length of a link.
Same bound was shown by Andrews and Dinitz \cite{AndrewsD09} but in comparison with optimum that is allowed to choose arbitrary power.
Constant factor approximation was obtained for uniform power, also on the plane, by Goussevskaia et al.~\cite{GHWW09}, which was generalized to all length-monotone, sublinear power assignments and arbitrary metrics space by Halld\'orsson and Mitra \cite{SODA11}. Kesselheim \cite{KesselheimSoda11} gave a constant-factor approximation for the joint problem of selecting links and assigning them feasible power (see also earlier work of Chafekar et al.~\cite{chafekar07}.

All the results lead to equivalent bounds for the centralized scheduling problem with $O(\log n)$-factor overhead. In particular, $O(\log n)$-approximation holds for scheduling with length-monotone, sublinear power \cite{SODA11} and with arbitrary power control \cite{KesselheimSoda11}.
Also, the problem remains NP-hard \cite{gouss2007}.
For the results in terms of $\Delta$ on the plane \cite{gouss2007,AndrewsD09}, 
this overhead can be avoided (see, e.g., \cite{us:talg12}).
Scheduling with arbitrary power control can also be approximated within a factor of $O(\log n \log\log \Delta)$ when the algorithm uses mean power.
For linear power on the plane, an algorithm using 
$O(\chi + \log^2n)$ slots for instances with optimal schedule length $\chi$
was given by Fangh\"anel et al.~\cite{FKV09}; on the plane, this can be improved to a constant factor \cite{tonoyanlinear}. 
A bi-directional version was studied by Fangh\"{a}nel et al. \cite{FKRV09} 
and further treated in \cite{us:talg12,SODA11} and 
the joint multi-hop scheduling and routing was treated by Chafekar et al.~\cite{chafekar07}.

In the distributed setting, the capacity problem
was treated with no-regret learning by Dinitz \cite{Dinitz2010}
culminating in a $O(1)$-approximation algorithm for uniform power
of \'Asgeirsson and Mitra \cite{GT2011}. 
However, these game-theoretic algorithms take time
polynomial in $n$ to converge, and thus can be viewed more appropriately as
determining capacity instead of realizing it in ``real time''.

For distributed scheduling, the only work that we are aware of is the
groundbreaking paper of Kesselheim and V\"ocking \cite{KV10}, who give
a distributed $O(\log^2 n)$-approximation algorithm for the scheduling
problem with fixed length-monotone and sublinear power assignment.
Our results constitute a $\Omega(\log n)$-factor
improvement. Kesselheim and V\"ocking also extend their results to
multi-hop scheduling, with the same approximation factor, for which
our improvements do not apply, and to routing, with an extra logarithmic factor.

A versatile measure introduced in \cite{KV10} is 
the maximum average affectance $\maa$ of a link set $L$, 
defined as
$$\maa(L) := \max_{R \subseteq L} \text{avg}_{l \in R} a_R(l)
= \max_{R \subseteq L} \frac{a_R(R)}{|R|}\ .$$
They then show two results that combined yield the $O(\log^2
n)$-approximation factor. On the one hand, they show that $\maa(L) =
O(\chi(L) \log n)$. On the other hand, they present a natural algorithm
(which we also use in this work) that schedules links in $O(\maa(L)
\log n)$ slots. We show that both of these bounds are tight. Thus, it
is not possible to obtain improved approximation using the measure $\maa$.


Following the original publication of this work, the results have been
applied to distributed connectivity and aggregation
\cite{PODC12,PODC13}.  A different approach for distributed capacity
was proposed by Pei and Kumar \cite{pei2012}, with complexity that is
a function of the link lengths.  In a recent follow-up work,
Halld\'orsson et al.~\cite{us:SODA13} have shown that $\maa(L) = O(\chi)$
for all sublinear, length-monotone power assignments other than
uniform power.

\subsection{Our Contributions}
We achieve the following results:

\begin{theorem}
There is a $O(\log n)$-approximate distributed algorithm for the
scheduling problem, in arbitrary metric space and for all length-monotone
sublinear power assignments.
\end{theorem}

\begin{theorem}
For every $n$, there is an instance $L_n$ of links on the real line
that can be scheduled in two slots but for which every
every distributed algorithm uses $\Omega(\log n)$-slots (w.h.p).
Thus, $\Theta(\log n)$ is the best absolute approximation factor for
a distributed scheduling algorithm.
\end{theorem}

As in \cite{KV10}, our upper bound results hold in arbitrary distance metrics (and do not require the common assumption that $\alpha > 2$). We also show that the results hold independent of the ambient noise term $N$, extending \cite{KV10}.
The lower bound result necessarily holds independent of power assignment strategy and for all settings of the technological constants $\alpha$, $\beta$ and $N$.

One of our main technical insights is to devise a different measure that 
involves \emph{median} rather than average affectance.
The measure $\Lambda = \Lambda(L)$ is given by
\[ \Lambda(L) := \max_{R \subseteq L} \text{median}(A(R)) \ , \]
where $A(R) = \{a_R(l) : l \in R\}$ is the multi-set of
in-affectance values of links in the subset $R$,
and $\text{median}(X)$ denotes the median of a multi-set $X$.
%
Since we only insist that half of the given subset $R$ of
links have affectance bounded by $\Lambda$, 
the value of $\Lambda$ may be much smaller than $\maa$. Indeed,
we show that $\Lambda = O(\chi)$ and that the algorithm schedules all
links in time $O(\Lambda \log n)$, achieving the claimed approximation
factor. 

The other main technical contribution of the paper is the
introduction of the concept of \emph{anti-feasibility}. A set $S$ of
links is anti-feasible \footnote{For a technical reason we use a different constant here than for feasibility; the signal-strengthening result of \cite{HW09}  implies that this only affects constants in the approximation factors.} if $a_v(S) \le 2$, for every $l_v$ in $S$;
i.e., if the out-going affectance from each link is small. A set is
\emph{bi-feasible} if it is both feasible and anti-feasible. We
observe in this paper that every feasible set contains a large
bi-feasible set and that certain analyses are easier on bi-feasible
sets. This has proved useful in later works, e.g., in giving
simplified analysis of capacity approximation algorithms
\cite{KesselheimThesis,us:INFOCOM12}.

In the next section, we give the improved analysis of a $O(\log
n)$-factor for distributed scheduling, via the measure $\Lambda$; the
treatment of acknowledgments is given in Section \ref{sec:acks}. We
show in Section \ref{sec:lbound} that this logarithmic factor is best
possible, and give a construction in Section \ref{sec:bara} that
shows that this result cannot be obtained in terms of the
measure $\maa$.

\section{$O(\log n)$-Approximate Distributed Scheduling Algorithm}

The algorithm from \cite{KV10}, listed below as \algo{Distributed},
is a natural backoff scheme, in the tradition of  ALOHA \cite{Tanenbaum:2002:CN:572404}.
It is run synchronously, but independently, on each sender of a link. 
The algorithm, and all the results in this section, work for an
arbitrary fixed sublinear length-monotone power assignment.

\begin{algorithm}                      
\caption{Distributed}          
\label{alg3}                           
\begin{algorithmic}[1]                    
     \STATE $k \leftarrow 0$
     \LOOP \label{needack}
       \STATE $q = \frac{1}{4 \cdot 2^k}$ \label{setq}
       \FOR{$\frac{4}{q} c_1 \ln n$ slots}
          \STATE transmit with i.i.d.\ probability $q$
          \IF{successful (and acknowledged)}
            \STATE halt
          \ENDIF
       \ENDFOR
       \STATE $k \leftarrow k + 1$
     \ENDLOOP
\end{algorithmic}
\label{alg13ig}
\end{algorithm}

The algorithm is mostly self-descriptive. The constant $c_1$ is to be chosen to satisfy the high probability bound desired.
One point to note is that Line \ref{needack} necessitates some sort of
acknowledgment mechanism for the distributed algorithm to stop. For simplicity, we will defer the issue
of acknowledgments to Section \ref{sec:acks} and simply assume their existence for now. Thm.~\ref{thm:mainth} below implies our main positive result. 
Let $\Lambda = \Lambda(L)$.

\begin{theorem}
If all links of a set $L$ of $n$ links run \algo{Distributed}, 
then $L$ is fully scheduled in $O(\Lambda \log n)$ slots, with high probability.
\label{thm:mainth}
\end{theorem}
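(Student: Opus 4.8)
The plan is to exploit the fact that, because every still-active link runs the identical deterministic phase schedule on a synchronized clock and starts from $k=0$, at every point in time all active links share the same counter $k$ and hence the same transmission probability $q=1/(4\cdot 2^k)$; a link only ever drops out (upon receiving its acknowledgement), which reduces the number of concurrent transmissions and can only help the survivors. I would first observe that one \emph{cannot} hope to make progress gradually across phases: the length $8c_3\ln n/q=\Theta(2^k\log n)$ of phase $k$ grows geometrically, so the cumulative time through phase $k$ is $\Theta(2^k\log n)$, and to land within $O(\Lambda\log n)$ total I must argue that essentially all links finish by the first ``critical'' phase $k^\star$, namely the first $k$ with $q\le 1/(8\Lambda)$, so that $2^{k^\star}=\Theta(\Lambda)$. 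The cumulative time to reach \emph{and} complete this phase is $\sum_{k\le k^\star}\Theta(2^k\log n)=\Theta(2^{k^\star}\log n)=O(\Lambda\log n)$, matching the budget; the whole task is therefore to show that phase $k^\star$ alone clears every remaining link with high probability.

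The engine is a single-slot estimate. Fix a slot of phase $k^\star$, let $R$ be the set of links active at the \emph{start} of the phase, and call $\ell_v$ \emph{good} if $a_R(v)\le 4\Lambda$. By the very definition of $\Lambda$, at least $|R|/4$ links of $R$ are good. I would then show that a good link $\ell_v$ succeeds in a given slot with probability at least $q/2$, conditioned on the entire past: it transmits with probability $q$, and given that it does, the expected interference it sees equals $q\,a_{R'}(v)$, where $R'\subseteq R$ is the currently-active set. Since affectance is monotone under removing links, $a_{R'}(v)\le a_R(v)\le 4\Lambda\le 1/(2q)$, so this expectation is at most $1/2$ and Markov's inequality gives $\Pro[a_S(v)\le 1\mid \ell_v\text{ transmits}]\ge 1/2$. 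Because this per-slot bound holds regardless of which links have already succeeded, a stochastic-domination argument shows a good link fails an entire block of $B=2/q$ consecutive slots with probability at most $(1-q/2)^{B}\le e^{-1}$, i.e.\ it clears the block with probability at least the constant $p_0=1-1/e$.

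I would then chop phase $k^\star$, which has $8c_3\ln n/q$ slots, into $m=4c_3\ln n=\Theta(\log n)$ blocks of $B=2/q$ slots each, and let $R_i$ denote the active set entering block $i$. Combining the block estimate with the counting bound from $\Lambda$, linearity of expectation yields $\Ex[\,|R_{i+1}|\mid R_i\,]\le |R_i|-p_0\cdot\frac{|R_i|}{4}=c\,|R_i|$ for $c=1-p_0/4<1$, since only the good links (a $1/4$ fraction of $R_i$) need be counted and each survives the block with probability at most $1-p_0$. Iterating the tower rule gives $\Ex[\,|R_m|\,]\le c^{m}n$, and taking the algorithm's constant $c_3$ large enough forces $c^{m}n\le n^{-\Omega(1)}$; a final Markov bound on the integer-valued $|R_m|$ then bounds the probability that \emph{any} link is still active at the end of phase $k^\star$ by $n^{-\Omega(1)}$. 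This is exactly the claimed high-probability guarantee, and since it is achieved within time $O(\Lambda\log n)$ the theorem follows.

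The main obstacle I expect is concentration. The successes of distinct good links within one block are strongly correlated, as they are driven by the same random transmission pattern, so a naive Chernoff bound on the number of survivors is unavailable. The idea that sidesteps this is to track \emph{only} the conditional expectation $\Ex[\,|R_{i+1}|\mid R_i\,]$, which requires nothing but the marginal per-link success probabilities and linearity of expectation; geometric decay of this expectation over $\Theta(\log n)$ blocks, followed by Markov on the integer count, delivers the whole-set high-probability statement without any joint concentration inequality. A secondary delicate point is that the single-slot estimate must remain valid conditioned on the arbitrary history of which links have already dropped out, which is precisely why I phrase it through monotonicity of affectance and stochastic domination rather than fixing a static interference set.
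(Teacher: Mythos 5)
Your proposal is correct and follows essentially the same route as the paper: the single-slot estimate (a $\ge |R|/4$ fraction of good links with $a_R(v)\le 4\Lambda$, each succeeding with probability $\ge q/2$ via Markov on expected affectance) is exactly the paper's Lemma~\ref{mainlem1}, and the geometric decay of the expected active count over the critical phase followed by Markov, plus the geometric series over earlier phases, is the paper's proof of Theorem~\ref{thm:mainth}. Your intermediate block decomposition (constant success probability per block of $2/q$ slots) is a harmless repackaging of the paper's per-slot recursion $\Ex(n_{t+1})\le (1-q/8)\Ex(n_t)$.
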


To prove Thm.~\ref{thm:mainth}, we claim the following. 
\begin{lemma}
Consider a subset $R \subseteq L$ of links and a particular time slot $t$ in which each sender of $R$ transmits with probability $q \leq \frac{1}{2 \Lambda}$. 
Then, the expected number of successful transmissions is at least $\frac{q \cdot |R|}{4}$.
\label{mainlem1}
\end{lemma}
\begin{proof}
Define $M = M_{\Lambda}(R) = \{l_u \in R : a_{R}(u) \le \Lambda\}$. 
By the definition of $\Lambda$, $|M| \ge |R|/2$.
Thus, it suffices then to show that at least $q|M|/2$ transmissions in slot $t$ are successful in expectation. 

Intuitively, the success probability of a link is proportional to its
in-affectance. The links in $M$ are the ones with low in-affectance, so as long as the transmission probability $q$ is less than $1/(2\Lambda)$, they will succeed with probability $1/2$ if transmitting.

For $l_u \in R$, recall that $T_{u} = T_u(t)$ is the indicator random variable that
link $l_u$ transmits, and let $S_{u}=S_u(t)$ be the indicator random variable that $l_u$ succeeds.

We shall make use of a few elementary facts about probabilities.
For a (Bernoulli) indicator random variable $X$, $\Ex(X) = \Pr(X)$.
For random variables $X_1, X_2, \ldots$, it holds by the 
linearity of expectation that $\sum_i \Ex(X_i) = \Ex(\sum_i X_i)$.
And, for a random variable $X$ that assumes non-negative values, 
$\Pro(X > 1) \le \Ex(X)$.

Armed with these facts, we can now bound
the probability that a transmitting link $l_u \in M$ is unsuccessful:
\begin{align*}
\Pro(S_u = 0|T_u = 1)  & = \Pro\left(\sum_{l_v \in R} a_{v}(u) T_{v} > 1\right) \\
& \leq \Ex\left(\sum_{l_v \in R} a_{v}(u) T_{v}\right)  \\
& = \sum_{l_v \in R} a_{v}(u) \Ex(T_{v})  \\
& =  q \sum_{l_v \in R} a_{v}(u) \leq q \cdot  \Lambda \ ,
\end{align*}
where the first equality uses (\ref{eq:feasaff}),
and the last inequality uses the definition of $M$. 

Thus, when $q \le \frac{1}{2\Lambda}$,
\[ \Pro(S_u = 0|T_u = 1) \le 1/2 \ , \]
which allows us to bound the 
probability of link $l_u$ transmitting in the time slot by
\begin{align*}
\Ex(S_{u}) & = \Pro(S_{u} = 1) \\
& = \Pro(T_{u} = 1)\Pro(S_{u} = 1|T_u = 1) \\
& = q (1 - \Pro(S_u = 0|T_u = 1)) \\
& \ge q/2\ .
\end{align*}
The expected number of successful links in the time slot
is then
\begin{align*}
\Ex\left(\sum_{l_u \in R}S_u\right) & = \sum_{l_u \in R} \Ex(S_u)  \\
  & \geq \sum_{l_u \in M} \Ex(S_u)   \ge  |M| \cdot q/2 \ge |R| \cdot q/4 , 
\end{align*}
implying the lemma.
\end{proof}


\begin{proof}[of Thm.~\ref{thm:mainth}]
Given Lemma \ref{mainlem1}, the theorem 
follows essentially from the arguments in Thms.\ 2 and 3 of \cite{KV10}.
Let $\hat{q} = 2^{-(1+\lceil \lg \Lambda\rceil)}$, i.e., the unique power of two 
satisfying $\frac{1}{4\Lambda} \le \hat{q} \le \frac{1}{2\Lambda}$.

We first bound the probability that not all links
are scheduled during the iteration of the outer loop 
when $q$ in Line \ref{setq} equals $\hat{q}$.

Let $\hat{t}$ be the first time slot where $q \le \hat{q}$.
Let $n_t$ be the random variable indicating the number of links that did not successfully transmit in the first $t$ time slots.

Lemma \ref{mainlem1} implies that for any given value $s$ and time slot $t \ge \hat{t}$,
  \[ \Ex(n_{t}|n_{t-1}=s) \leq s - \frac{\hat{q}}{4}s\ , \] 
and thus
$$\Ex(n_{t}) \leq \sum_{s=0}^{\infty} \Pro(n_{t-1} = s) \cdot (1 - \hat{q}/4) s 
  = (1 - \hat{q}/4) \Ex(n_{t-1})\ .$$
Noting that $n_0 = n$, this yields that 
\[ \Ex(n_t) \leq (1 - \hat{q}/4)^t n\ . \] 
Now, after $\hat{t} + 4 c_1 \ln n/\hat{q}$ time slots,
the expected number of remaining requests is 
\begin{align*}
\Ex(n_{\hat{t}+ 4 c_1 \ln n/\hat{q}}) & \leq (1 - \hat{q}/4)^{4 c_1 \ln n/\hat{q}} \Ex(n_{\hat{t}}) \\
  & \leq \left(\frac{1}{e}\right)^{c_1 \ln n} n = n^{1 - c_1} \ .
\end{align*}
By Markov's inequality, 
\begin{align*}
\Pro(n_{\hat{t}+ 4 c_1 \ln n/\hat{q}} \neq 0) & = \Pro(\hat{t}+ n_{4 c_1 \ln n/\hat{q}} \geq 1) \\
  & \leq \Ex(n_{\hat{t}+ 4 c_1 \ln n/\hat{q}}) \leq n^{1-c_1}\ . 
\end{align*}
Thus, with high probability all the links are scheduled while $q \ge \hat{q}$.

Finally, to bound the total running time of the algorithm, we sum up the
spent for values of $q$ smaller than $\hat{q}$, bounding $t_0$.
This is a geometric series given by
\begin{align*}
t_0 = & \sum_{i=2}^{\lg (1/\hat{q})} \frac{8 c_1 \ln n}{2^{-i}}  \\
 & = 8 c_1 \ln n \sum_{i=2}^{\lg (1/\hat{q})} 2^i \\
 & \le 8 c_1 \ln n \cdot 2^{\lg(1/\hat{q})+1} \\
 & = 8 c_1 \ln n \cdot \frac{2}{\hat{q}} \\
 & \le 64 c_1 \Lambda \ln n \ ,
\end{align*}
establishing the time complexity.
%
\end{proof}

\subsection{Bounding the Measure}

We need the following lemma to get a handle on affectances.
Recall that we assumed that the implicit power assignment is length-monotone and sublinear. 
\begin{lemma}[Lemma 7, \cite{KV10}]
Let $L$ be a feasible set and $l_u \not\in L$ be link with $\ell_u
\leq \ell_v$ for all $l_v \in L$. Then,
$a_{L}(u) = O(1)$.

\label{incoming}
\end{lemma}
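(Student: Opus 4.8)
The plan is to first reduce the claim to a purely geometric summation and then control that sum using the feasibility of $L$ together with the separation guarantee of Lemma~\ref{lem:ind-separation}.

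First I would strip off the power assignment. Since $\ell_u \le \ell_w$ for every $\ell_w \in L$, sub-linearity gives $P_w/P_u \le (\ell_w/\ell_u)^\alpha$, so
\[ a_w(u) \le \min\Bigl\{1,\ c_u \frac{P_w}{P_u}\Bigl(\frac{\ell_u}{d_{wu}}\Bigr)^\alpha\Bigr\} \le \min\Bigl\{1,\ c_u\Bigl(\frac{\ell_w}{d_{wu}}\Bigr)^\alpha\Bigr\}. \]
Because $\ell_u$ is a genuine link (its signal clears the noise floor), $c_u = O(1)$, so it suffices to prove $\sum_{\ell_w \in L} \min\{1,(\ell_w/d_{wu})^\alpha\} = O(1)$. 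This already removes all dependence on the specific (uniform, linear, or sub-linear) power assignment.

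Next I would group the links of $L$ by the distance of their sender from $r_u$, using dyadic shells $L_j = \{\ell_w : 2^{j}\ell_u \le d_{wu} < 2^{j+1}\ell_u\}$ together with a near shell collecting senders within $\ell_u$ of $r_u$. Within a shell all senders lie within a common distance $R_j \approx 2^{j}\ell_u$ of $r_u$, hence pairwise within $2R_j$. I would then split each shell into the \emph{far} links with $\ell_w \le R_j$, whose individual contribution is the small quantity $(\ell_w/R_j)^\alpha$, and the \emph{near} links with $\ell_w > R_j$, whose contribution is capped at $1$. For the far links I would bound $\sum \ell_w^\alpha = O(R_j^\alpha)$ by feeding the pairwise separation $d_{wx} d_{xw} \ge \ell_w \ell_x$ of Lemma~\ref{lem:ind-separation} (applied with $q=1$, valid since $L$ is feasible) into a packing estimate; this makes each shell contribute $O(1)$ with a factor decaying geometrically in $j$, so the shells sum to $O(1)$. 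For the near links I would pick the shortest one as a reference and observe that, since its sender is close to all the others relative to its own length, each other near link affects it by a constant amount — here length-monotonicity is what keeps $P_w/P_x$ from collapsing — so feasibility ($a_L(\cdot)\le 1$) forces only $O(1)$ near links per relevant scale.

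The main obstacle is exactly this near-field regime: a link whose sender sits close to $r_u$ measured against its own length contributes $\Theta(1)$, so the whole lemma hinges on showing that a feasible set cannot contain many such links. In a general metric there is no volume to appeal to, so the count must be squeezed out of the SINR constraints and Lemma~\ref{lem:ind-separation} directly, and length-monotonicity is essential to stop a cluster of very unequal-length, co-located links from slipping past the feasibility bound. Once the near-field count and the far-field packing estimate are in hand, the two geometric series combine to give $a_L(u) = O(1)$. The bi-directional case is identical after replacing $d_{wu}$ by the symmetrized inter-link distance.
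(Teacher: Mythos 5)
The paper does not actually prove this lemma --- it imports it verbatim as Lemma 7 of \cite{KV10} (``The first of these Lemmas is known'') --- so your attempt can only be measured against the known proof, which, like this paper's own proof of the companion Lemma \ref{outgoing}, is a \emph{charging} argument: pick a reference link $\ell_v \in L$ (the one whose sender is nearest $r_u$), use the triangle inequality together with Lemma \ref{lem:ind-separation} to show $d_{wv} = O(d_{wu})$ for all but a constant number of exceptional links $\ell_w$, transfer powers via sub-linearity and length-monotonicity to get $a_w(u) = O(a_w(v))$ termwise, and finish with the single feasibility constraint $a_L(v) \le 1$. Your opening reduction is correct and matches this scheme: sub-linearity does give $P_w/P_u \le (\ell_w/\ell_u)^\alpha$, hence $a_w(u) \le \min\{1, c_u(\ell_w/d_{wu})^\alpha\}$ (modulo the standard assumption that powers clear the noise requirement by a constant factor, so $c_u = O(1)$). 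But from there you diverge into a shell-plus-packing argument, and that is where the proof breaks.

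Two concrete gaps. First, even granting your per-shell estimate $\sum_{\ell_w \in L_j,\,\mathrm{far}} \ell_w^\alpha = O(R_j^\alpha)$, each shell then contributes $O(1)$, and nothing in your sketch supplies the asserted ``factor decaying geometrically in $j$'': the number of nonempty shells can be $\Theta(\log \Delta)$ (with $\Delta$ the ratio of largest to smallest distances), so as written you only recover an $O(\log n)$-type bound --- i.e., Lemma 9 of \cite{KV10}, precisely the bound this lemma exists to beat. The $O(1)$ in the real proof comes from charging \emph{all} scales simultaneously against the one affectance budget $a_L(v) \le 1$ of a single reference link; per-shell accounting structurally cannot see that global constraint. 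Second, the packing estimate itself is unavailable in this paper's setting: the results are claimed for arbitrary distance metrics with no assumption $\alpha > 2$, and in a star (tree) metric one can place unboundedly many links inside a single shell, all at distance $\approx R_j$ from $r_u$ and pairwise satisfying $d_{wx}\, d_{xw} \ge \ell_w \ell_x$ --- so the pairwise separation of Lemma \ref{lem:ind-separation} implies no volume bound there; what limits the mass is again feasibility (the incoming affectance budgets of the links of $L$ themselves), not geometry. It is no accident that the paper's own Appendix \ref{appB} construction is set in a tree metric. Finally, you yourself identify the near-field count as the crux and leave it with ``the count must be squeezed out of the SINR constraints \ldots directly''; that is the missing idea rather than a deferred routine check --- and even your stated target of ``$O(1)$ near links per relevant scale'' would again sum to $\Theta(\log \Delta)$ over scales. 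In short: the sub-linearity reduction is right and reusable, but the core mechanism of the proof is absent.
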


We now prove the following complementary result. It can be contrasted with Lemma 9 of \cite{KV10}, which without the anti-feasibility condition can only
give $a_{v}(L) = O(\log n)$.
The second part of the lemma essentially follows Lemma 11 of \cite{GT2011} (which had the unnecessary assumption that $L$ is feasible).


We first need the following result.
\begin{lemma}[\cite{us:talg12}]
Let $l_u, l_v$ be links with $\min(a_u(v),a_v(u))\le{}1/q$.
Then, $d_{uv} \cdot d_{vu} \ge q^2 \cdot \ell_u \ell_v$. 
\label{lem:ind-separation}
\end{lemma}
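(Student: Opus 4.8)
The plan is to exploit the fact that multiplying the two asymmetric affectances $a_u(v)$ and $a_v(u)$ causes the power assignment to cancel, leaving a clean geometric inequality. First I would record that since both $\ell_u$ and $\ell_v$ lie in a $q^\alpha$-signal set, the total affectance on each link is at most $1/q^\alpha$; as affectance is a sum of nonnegative terms, this already bounds the individual pairwise affectances, giving $a_u(v) \le 1/q^\alpha$ and $a_v(u) \le 1/q^\alpha$.

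Next I would resolve the thresholded form of the affectance. For $q > 1$ we have $1/q^\alpha < 1$, so the bound $\min\{1,\cdot\} \le 1/q^\alpha < 1$ forces the minimum to be attained by the interference term rather than by $1$. This yields the unthresholded inequalities
\[
c_v \frac{P_u}{P_v}\left(\frac{\ell_v}{d_{uv}}\right)^\alpha \le \frac{1}{q^\alpha}, \qquad c_u \frac{P_v}{P_u}\left(\frac{\ell_u}{d_{vu}}\right)^\alpha \le \frac{1}{q^\alpha}.
\]

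The key step is to multiply these two inequalities. The factors $P_u/P_v$ and $P_v/P_u$ cancel exactly --- this is precisely why the statement is independent of the (arbitrary length-monotone, sub-linear) power assignment --- leaving $c_u c_v \left(\ell_u \ell_v /(d_{uv} d_{vu})\right)^\alpha \le 1/q^{2\alpha}$. Since $c_u, c_v \ge \beta \ge 1$, I can drop these constants (they only decrease the left-hand side), take $\alpha$-th roots of the resulting positive quantities, and rearrange to obtain $d_{uv} d_{vu} \ge q^2 \ell_u \ell_v$.

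The only point requiring care --- and the closest thing to an obstacle --- is the resolution of the $\min$ in the definition of affectance; everything else is algebraic cancellation. One should keep in mind that the argument uses $q > 1$ so that $1/q^\alpha < 1$, which is the relevant regime, since a $q^\alpha$-signal set with $q > 1$ is a strict strengthening of SINR-feasibility.
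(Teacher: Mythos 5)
Your proof is correct, and it matches the standard argument: the paper itself imports this lemma from \cite{us:esa09full} without proof, and the proof there is exactly your cancellation trick---multiply the two pairwise affectance (equivalently, $q^\alpha$-strengthened SINR) inequalities so that the power ratios $P_u/P_v$ and $P_v/P_u$ cancel, drop $c_u, c_v \ge \beta \ge 1$, and take $\alpha$-th roots. Your extra step of resolving the $\min$ in the thresholded affectance via $1/q^\alpha < 1$ is the right point of care in this paper's formulation (the original works with unthresholded affectance), and your restriction to $q>1$ is indeed necessary for that step and covers the regime in which the paper applies the lemma (e.g., $3^\alpha$-signal sets in Lemma \ref{outgoing}).
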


\begin{lemma}
Let $L$ be an anti-feasible set with length-monotone and sublinear power
and let $l_v \not\in L$ be a link with
$\ell_v \le \ell_u$, for every $l_u \in L$.
Then, $a_{v}(L) = O(1)$.
\label{outgoing}
\end{lemma}
\begin{proof}
We first use a variation of the signal strengthening technique of \cite{HW09},
given as Thm.~\ref{anti-signal-strengthening} in the Appendix.
This allows us to decompose
the set $L$ into $\lceil 4 \cdot 3^\alpha \rceil^2$ sets, 
where each set $S$ satisfies $a_w(S) \leq \frac1{3^{\alpha}}$, for all
$l_w \in S$. 
We shall prove the claim for $S$; the claim will then hold for $L$ by summing over the $\lceil 4 \cdot 3^\alpha \rceil^2$ sets.

Let $l_u = (s_u, r_u)$ ($l_w = (s_w, r_w)$) be the link in $S$ whose
sender (receiver) is closest to $s_v$, i.e., $d(s_v,s_u) \le \min_{l_x \in S} d(s_v,s_x)$ ($d(s_v,r_w) \le \min_{l_x \in S} d(s_v,r_x)$), respectively.
Let $h = d(s_v, s_{u})$. 
We claim that for all links $l_x$ in $S$, $l_x \neq l_w$, it holds that
\begin{equation}
d(s_v, r_{x}) \geq \frac{1}{2} h \ .
\label{eqn:dist1}
\end{equation}

To prove this, assume, for contradiction, that $d(s_v, r_{x}) < \frac{1}{2} h$. 
Then, by the definition of $l_w$, $d(s_v, r_{w}) < \frac{1}{2} h$,
and by the definition of $l_u$, $d(s_v, s_x) \ge d(s_v,s_u) \geq h$ 
and $d(s_v, s_w) \geq h$. 
Thus, $\ell_w \geq d(s_v, s_w) - d(s_v, r_w) > \frac{h}{2}$ 
  and similarly $\ell_{x} > \frac{h}{2}$. 
On the other hand, by the triangular inequality and the assumed inequality,
\[ d(r_{w}, r_{x}) \le d(r_w,s_v) + d(s_v,r_x)
     < \frac{h}{2} + \frac{h}{2} < h \ . \]
Now, 
\begin{align*}
d_{w x} \cdot d_{x w} & \leq (\ell_{w} + d(r_{w}, r_{x}))(\ell_{x} +d(r_w, r_{x}))  \\
  & < (\ell_{w} + h)(\ell_{x} + h) \\ 
  & < 9 \ell_{w} \ell_x\ ,
\end{align*}
contradicting Lemma \ref{lem:ind-separation}.  This establishes
(\ref{eqn:dist1}).

Now, by the triangular inequality, the definition of $h$ and (\ref{eqn:dist1}),
\[ d_{u x}  = d(s_{u}, r_{x}) \leq d(s_{u}, s_v) + d(s_v, r_{x}) \leq 3 d(s_v, r_{x}) = 3 d_{v x}\ . \] 
We observe that $P_v \le P_u$ holds by length-monotonicity.
Also, note that since the maximum affectance between links in $S$ is 
$\frac1{3^{\alpha}}$, the thresholding in the affectance definition does not take effect, implying that 
$a_{u}(x) = c_x\frac{ P_{u}}{d_{u x}^{\alpha}} \frac{\ell_x^{\alpha}}{P_{x}}$.
Thus, 
\[
a_{v}(x) = c_x \frac{P_{v}}{d_{vx}^{\alpha}} \frac{\ell_x^{\alpha}}{P_{x}} 
   \leq c_x \frac{3^{\alpha} P_{u}}{d_{u x}^{\alpha}} \frac{\ell_x^{\alpha}}{P_{x}} 
   = 3^{\alpha} a_{u}(x)\ .  \]
Finally, summing over all links in $S$, 
\begin{align*}
a_v(L) & = a_{v}(w) + \sum_{l_x \in S \setminus \{l_w\}}  a_{v}(x)\\
  & \leq 1 + 3^\alpha \sum_{l_x \in S \setminus \{l_w\}}  a_{u}(x) \\
  & \leq 1 + 3^\alpha \cdot 2 = O(1) \ , 
\end{align*}
using anti-feasibility in the last inequality.
The lemma follows.
\end{proof}


We can now derive the needed bound on the measure.
\begin{theorem}
Let $L$ be a set of links.
Then, $\Lambda(L) = O(\chi(L))$.
\label{mainlem2}
\end{theorem}
\begin{proof}
Let $\chi = \chi(L)$ and let $R$ be an arbitrary subset $R \subseteq L$.
To prove the theorem, it suffices to show that 
at least half of the links in $R$ have in-affectance at $O(\chi(L))$.



Consider a partition of $R$ into $\chi$ feasible subsets $S_1, S_2, \ldots,
S_\chi$, and define $S'_i = \{l_v \in S_i : a_v(S_i) \leq 3\}$.
We claim that $S'_i$ contains at least two thirds of the links in $S_i$.
\begin{claim}
For all $i$, $|S'_i| \geq \frac{2|S_i|}{3}$.
\end{claim}
\begin{proof}
Since $S_i$ is feasible, it follows from (\ref{eq:feasaff}) that
$a_{S_i}(v) \leq 1$, for every link $l_v \in S_i$. 
Let $\hat S_i = S_i \setminus S'_i$.
Now, 
\[ a_{\hat S_i}(S_i) \le \sum_{l_v \in S_i}a_{S_i}(v) \le \sum_{l_v \in  S_i} 1
   \leq |S_i|\ . \]
But, $a_{\hat S_i}(S_i) = \sum_{l_v \in \hat S_i}a_v(S_i) \geq 3 \cdot |\hat S_i|$, 
by the definition of $\hat{S_i}$. Thus, $|\hat S_i| \leq 2|S_i|/3$, proving the claim.
\end{proof}

\noindent Let $R' = \cup_i S'_i$. By the above claim, $3|R'|/4 \geq |R|/2$.

We next show the following.
Let $c_2$ ($c_3$) be the constant implicit in the big-oh notation in
Lemma \ref{incoming} (Lemma \ref{outgoing}), respectively.

\begin{claim}
 $a_R(R') \le (c_2 + c_3) |R| \cdot \chi.$
\end{claim}

\begin{proof}
We first observe that for every $i, j$,
\begin{align}
a_{S_j}(S'_i) & = \sum_{l_u \in S'_i} \sum_{l_v \in S_j} a_{v}(u) \quad\quad & \nonumber \\
& \leq \sum_{l_u \in S'_i}  \sum_{\substack{l_v \in S_j \\ \ell_v \geq \ell_u  }} a_{v}(u)  
  + \sum_{l_u \in S'_i}  \sum_{\substack{l_v \in S_j \\ \ell_v \le \ell_u  }} a_{v}(u)  
    \nonumber \\
& \le  \sum_{l_u \in S'_i} c_2 + 
\sum_{l_v \in S_j} \sum_{\substack{l_u \in S'_i \\ \ell_u \geq \ell_v}} a_{v}(u) 
\nonumber \\
& \le  c_2|S_i'| + \sum_{l_v \in S_j} c_3 
\nonumber \\
& \le c_2 |S_i| + c_3|S_j| \ , \label{xbound}
\end{align}
using Lemma \ref{incoming} and rearrangement in the second inequality, and Lemma \ref{outgoing} in the third inequality.
We then obtain that
\begin{align*}
 a_R(R') & =  \sum_{i = 1}^\chi  \sum_{j = 1}^\chi a_{S_j}(S'_i) \\
   &  \le \sum_{i,j=1}^\chi c_2 |S_i| + c_3 |S_j| && \text{(By (\ref{xbound}))} \\
   &  = \sum_{i,j=1}^\chi (c_2 + c_3) |S_i|) && \text{(By symmetry)}\\
   &  = (c_2 + c_3) \chi \sum_{i=1}^\chi|S_i| \\
   &  = (c_2 + c_3) \chi |R| && \text{(Defn.~of $S_i$)}
\end{align*}
\end{proof}

 It follows that the average in-affectance $a_{R}(l')$ over the links $l' \in R'$ is 
 at most 
\[ \frac{a_R(R')}{|R'|} \le \frac{(c_3+c_4)|R|\cdot \chi}{|R'|} 
   \le \mu := \frac{3 (c_2 + c_3)}{2} \chi\ .  \]
Recall that $M_{4\mu}(R) = \{l \in R : a_R(l) \le 4 \mu \}$ is the set of links in $R$ of in-affectance at most four times the average.
By Markov's inequality, at least three fourths
 of the links have in-affectance at most four times the average;
namely, 
\[ |M_{4\mu}(R)| \ge |M_{4\mu}(R')|\ge 3|R'|/4 \ge |R|/2 \ . \]
That is, at least half the links in $R$ have in-affectance at most $4\mu$.
Hence, the median in-affectance of links in $R$ is bounded above by 
\[ \text{median}(A(R)) \le 4\mu = O(\chi)\ . \]
Since this holds for every given $R$, the theorem follows.

\end{proof}

\subsection{Acknowledgments}
\label{sec:acks}

In the preceding exposition, we ignored the issue of sending acknowledgments from receivers to senders. 
We can treat acknowledgments in a fashion similar to Kesselheim and
V\"ocking \cite{KV10}.
We outline their approach briefly, but direct the reader to their
paper for the details.

A special slot for acknowledgments is inserted between the time slots
used by Algorithm \ref{alg13ig}.  A node that successfully received a
packet will transmit an acknowledgment with probability $p = 1/8$.
The power $P^*_v$ used for the acknowledgment on link $l_v$ is
chosen to be proportional to $P^*_v = \ell^\alpha/P_v$ (using the
right scaling factor).
Kesselheim and V\"ocking show that at least half of these
acknowledgments are successful in expectation.
That implies that we can modify Lemma \ref{mainlem1} to claim that
the expected number of successfully acknowledged transmissions is at
least $p \cdot q |R|/4 = q|R|/32$, losing only a constant factor. 
The rest of the arguments are then identical.

The only catch is that they do assume in their analysis that there are
no weak links in the instance; a link $l_v$ is said to be \emph{weak}
iff $c_v > C \beta$, for an appropriately chosen constant $C$ (whose value affects the choice of $p$). 
We show here how to extend the approach to deal
with weak links. For simplicity of exposition, we illustrate it for the case of uniform power and assume that weak links satisfy $c_v > 3 \max(\beta, 1)$.

The original transmissions, using Algorithm \ref{alg13ig}, are
unchanged, but we allocate a separate time slot for the
acknowledgments of weak links.  Each receiver of a successfully
transmitting weak link sends an acknowledgment in that time
slot with probability $p'$ (to be chosen).

The key observation in the following lemma is that weak links must be spatially well-separated. This implies that differences between the positions of the sender and receiver of a link are minor,
allowing us to relate the success probability for an acknowledgment in terms of the observed success of the original transmission.

\begin{lemma}
Assume the use of uniform power.
Let $l_v$ be a weak link that transmits successfully in a given time slot $t$ of Algorithm \ref{alg13ig}. Then, the transmission is successfully acknowledged 
with probability $p'/2$ when $p' \le (1+2\ln 3 \cdot \alpha)^{-\alpha}$.
\end{lemma}

\begin{proof}

Let $l_u$ be another weak link that successfully transmitted at time $t$.
Since both were successful, (\ref{eq:feasaff}) is satisfied in both directions, which implies that
\begin{equation}
 d_{uv}^\alpha \ge c_v \ell_v^\alpha \ge 3 \ell_v^\alpha, ~\text{and}~ 
   d_{vu}^\alpha \ge 3 \ell_u^\alpha \ .
\label{eq:weak1}
\end{equation}
By the triangular inequality, $d_{uv} \le d_{vu}+\ell_u + \ell_v$,
which by (\ref{eq:weak1}) implies that
\[ d_{uv} \left(1 - \frac{1}{3^{1/\alpha}} \right) \le d_{uv} - \ell_v
  \le d_{vu} + \ell_u \le \left(1 + \frac{1}{3^{1/\alpha}} \right)d_{vu} \ .  \]
Now, observe that
\[ \frac{\left(1 + \frac{1}{3^{1/\alpha}}\right)}{\left(1 - \frac{1}{3^{1/\alpha}}\right)} \le 1 + \frac{2}{e^{(\ln 3)/\alpha}-1} \le 1 + \frac{2\alpha}{\ln 3}\ . \]
Thus, 
\begin{equation}
d_{uv}^\alpha \le \left(1 + \frac{2\alpha}{\ln 3}\right)^\alpha d_{vu}^\alpha\ .
\label{eq:inv-bnd}
\end{equation}

Now, let $l^*_v = (s^*_v, r^*_v) = (r_v,s_v)$ be the \emph{dual} link of $l_v$, with the roles of sender and receiver reversed. A transmission on $l_v$ is acknowledged on $l^*_v$.
We use (\ref{eq:inv-bnd}) to bound the in-affectances of a dual link $l^*_v$
from another dual link $l^*_u$:
\begin{align*}
a_{u^*}(v^*) & = c_v \left(\frac{\ell_v}{d_{vu}}\right)^\alpha \\
            & \le \left(1 + \frac{2\alpha}{\ln 3}\right)^\alpha 
                c_v \left(\frac{\ell_v}{d_{uv}}\right)^\alpha \\
            & = \left(1 + \frac{2\alpha}{\ln 3}\right)^\alpha a_u(v) \ .
\end{align*}
Let $S$ be the set of weak links that successfully transmitted in slot $t$
and $S^*$ the set of the corresponding dual links.
Suppose each link in $S$ transmits an acknowledgment with probability $p'$.
Then, the expected in-affectance of a link $l_v^*$ that transmits an ack is
bounded by
\begin{align*}
\Ex\left(\sum_{l_u^* \in S} p' \cdot a_{u^*}(v^*)\right)
 & \le \sum_{l_u^* \in S^*} \Ex(p' a_{u^*}(v^*)) \\
 & = p' \sum_{l_u^* \in S^*} a_{u^*}(v^*) \\
 & \le p' \left(1 + \frac{2\alpha}{\ln 3}\right)^\alpha \sum_{l_u^* \in S^*} a_u(v) \\
 & \le \frac{1}{2} a_S(v) 
   \le \frac{1}{2}\ , \\
\end{align*}
using the feasibility of $S$.
Hence, the probability that a link receives less than twice the expected in-affectance is at least $1/2$, i.e.,
a dual link that does attempt to transmit an acknowledgment
has at least $50\%$ chance of success.
The probability that a given link both attempts to send an acknowledgment and that the transmission is successful, is then at least $p'/2$.
\end{proof}

\section{$\Omega(\log n)$-Factor Lower Bound for Distributed Scheduling}
\label{sec:lbound}
We construct a set of $2 n$ unit length links on the line that can be scheduled in two slots while no distributed algorithm can 
schedule the set in less than $\Omega(\log n)$ slots.

We assume 
that all senders start at the same time in the same state and use the same (randomized) algorithm.
Note that the algorithm presented operates under these assumptions.

For simplicity, we assume the noise $N = 0$, but note that the
construction can be modified to hold for different values of $N$.  We
allow $\alpha$ and $\beta$ to be arbitrary positive values.  We start
with a \emph{gadget} $F$ with two identical links of length $1$, in a
yin-yang position, i.e., with the sender of one link in the same
position as the receiver of the other (it suffices that they be separated
by at most $(P_{max}/(\beta P_{min}))^{1/\alpha}$, where $P_{max}$ ($P_{min}$) is the maximum (minimum) power that can be used, respectively).
Let $x = (2\beta
n)^{1/\alpha}$.  The construction consists of $n$ such gadgets $F_i, i
= 1, 2, \ldots n$, placed on the line as follows: The sender of one
link and the receiver of the other link in $F_i$ are placed at point
$i (x+1)$ and the other two nodes of $F_i$ are placed at $i (x+1) +
1$.

The construction ensures that \emph{a link successfully transmits only
  if the other link in the gadget does not transmit}. This holds
independent of the power used on these links.
On the other hand, when using uniform power, 
the affectance from links of other gadgets is negligible.
To see this, consider the affectance on a link $l_u \in F_i$ from
all links of other gadgets, i.e., from all links $l_v \in \hat F :=
\cup_{j \neq i}F_j$. There are $2n - 2$ links in $\hat F$. The
distance $d_{vu} \geq x$. Therefore, $\sum_{\hat F}a_{v}(u) \leq (2n-2)
\frac{\beta}{x^{\alpha}}  < 1$. 
Thus, behavior of links in other gadgets is immaterial to the success
of a link.  This also implies that the scheduling number of this set
of links is $2$.  Note that since the construction uses equi-length
links, the only possible oblivious power assignment is the uniform
one.

To prove the lower bound, we say that gadget $F_i$ is \emph{active}
at time $t$ if neither link of $F_i$ has succeeded by time $t-1$,
and denote the event by $A_i(t)$.
Let $T_{u}(t)$ denote the indicator random variable that link $l_u$ 
transmits at time $t$.

\begin{lemma}
Let $F_i$ be a gadget and $t\ge 0$ be a time.
The transmission probabilities of the two links in $F_i$ at time $t$
are identical and independent, conditioned on $F_i$ being active
at time $t$.
\label{lbiid}
\end{lemma}
\begin{proof}
Let $l_u$ and $l_v$ be the links in gadget $F_i$.
Let $T_u=T_u(t)$ and $T_v=T_v(t)$, for short. By symmetry,
the distributions of $T_u$ and $T_v$ are identical, thus we need only to prove
their independence.

We can model the randomness used by the algorithms as an i.i.d.\ random
choice over a set $\calF$ of functions. Each $f \in \calF$ is a function 
 that takes a history of past transmissions and receptions over previous slots, and returns a binary transmission decision. 
Note that if $A_i(t)$ occurs then the histories of $l_u$ and $l_v$
over the previous $t-1$ slots are identical. 
The different histories that can result in $A_i(t)$ occurring are
disjoint; thus, it is enough to prove independence for a fixed history $H$.
Let $f_u$ and $f_v$ denote the
functions chosen by $l_u$ and $l_v$, and allow them also to represent the event that they get chosen. Once again, by symmetry,
there is some $\calF' \subseteq \calF$ such that $H$ happens iff 
$f_u \in \calF'$ and $f_v \in \calF'$. We will use the Iverson bracket
$[X]$ to denote the value $1$ if $X$ is true and $0$ otherwise.

\noindent Then, for fixed Boolean outcomes $a$ and $b$, 
\begin{align*}
\lefteqn{\Pro(T_u = a, T_v = b \,|\, H)} \\ 
& = \sum_{f_u \in \calF', f_v \in \calF'} \Pro(f_u f_v) [f_u(H) = a][f_v(H) = b] \\ 
& = \sum_{f_u \in \calF', f_v \in \calF'} \Pro(f_u) \Pro(f_v) [f_u(H) = a][f_v(H) = b]  \\ 
& = \sum_{f_u \in \calF'} \Pro(f_u)[f_u(H) = a]  \cdot \sum_{f_v \in \calF'} \Pro(f_v)  [f_v(H) = b] \\ 
&= \Pro(T_u = a | H) \Pro(T_v = b | H) \ , 
\end{align*}
thereby proving independence. We have used that 
$\Pro(f_u f_v) = \Pro(f_u) \Pro(f_v)$ in the second equality, 
which follows from the fact that $f_u$ and $f_v$ are chosen \emph{a priori} 
and independently.
\qed
\end{proof}

Let $p_t$ denote the i.i.d.\ probability that some link in a given gadget 
$F_i$ transmits at time $t$.
Now $\Pro(A_i(t+1) | A_i(t)) = p_t^2 + (1-p_t)^2$, which is minimized for $p_t = \frac{1}{2}$ with value $\frac{1}{2}$. 
Thus, 
\begin{equation}
\Pro(A_i(t+1) | A_i(t)) \geq \frac{1}{2}\ .
\label{eq:a-markovian}
\end{equation}

Intuitively, on average, at most half of the active gadgets become
inactive in any given round, and thus it takes $\lg n$ rounds for all
gadgets to become inactive.

\begin{theorem}
Let $z(n)$ be a random variable whose value is the smallest time $t$
at which none of the gadgets are active.
Then, $\Ex(z(n)) = \Omega(\log n)$.
\label{lbmain}
\end{theorem}
\begin{proof}
Consider gadget $F_i$.
Note that for every $t > 0$,
$A_i(1) \cap A_i(2) \cap \cdots \cap A_i(k) = A_i(k)$
and $\Pro(A_i(0)) = 1$. Let $t_0 = \lceil \lg n \rceil$.
Then, for every $t' \ge t_0$,
\begin{align*}
\Pro(A_i(t')) & = \Pro(A_i(0))\prod_{t=2}^{t'} \Pro(A_i(t)|\cap_{j < t} A_i(j)) \\
 & = 1 \cdot \prod_{t=2}^{t'} \Pro(A_i(t)|A_i(t-1)) \\
 & \ge 2^{-(t'-1)} > \frac{1}{n}\ ,
\end{align*}
by (\ref{eq:a-markovian}).
Let $Q_{t'} = \cap_i \overline{A_i(t')}$ be the event that 
none of the $n$ gadgets are active at time $t'$.
Since events of different gadgets are independent, it holds for any $t' \ge t_0$ that
\[ \Pro(Q_{t'}) = \prod_{i=1}^n (1 - \Pro(A_i(t')) \le \left(1 - \frac{1}{n}\right)^n \le e^{-1}\ . \]
Then, by definition of expectation, 
\[ \Ex(z(n)) = \sum_{t=1}^\infty \Pr(\overline{Q_t})
\ge t_0 \cdot \Pr(\overline{Q_{t_0}}) \ge (1-e^{-1}) \lg n\ . \]
\end{proof}

Note that bounding $\Ex(z(n))$ suffices to lower bound the expected
time before all links successfully transmit, since by definition a link cannot
succeed as long as the corresponding gadget is active.

\section{Tight Bound on Analysis via $\maa$}
\label{sec:bara}

We achieved a $O(\log n)$-approximation by avoiding the measure 
$\maa$ in our analysis. In contrast, the $O(\log^2 n)$ bound in
\cite{KV10} is achieved by proving two separate bounds involving $\maa$: 
first $ALG = O(\maa \log n)$, and second $\maa = O(\chi \log n)$,
where $ALG$ is the expected time taken by the algorithm.  The
tightness of the bound on $ALG$ under any oblivious power assignment
follows from Section \ref{sec:lbound}, as it is easy to verify that
$\maa = \Theta(1)$ in that construction.  We give a construction below
for which the second bound is tight.  Thus, going through $\maa$ is
not sufficient to obtain improved bounds, and different analysis is
required.

Our construction uses uniform power. 
This is necessary, since for other oblivious power assignments $\maa = O(\chi)$, by recent results of \cite{us:SODA13}.

\begin{theorem}
For every numbers $\hat{n}$ and every number $t$, there is a set $\hat{L}$ of $\hat{n}$ links with
$\chi(\hat{L}) = \Theta(t)$ and $\maa(\hat{L}) = \Omega(\chi(\hat{L}) \log \hat{n})$ under uniform power.
\label{thm:alg-bara-lb}
\end{theorem}

This lemma shows, perhaps surprisingly, that there can be a huge difference between the in-affectance and out-affectance of a link in a feasible set, thereby illustrating the need for the bi-feasibility concept.

\begin{lemma}
For every $n$, there is a set $L$ of $n$ links on the line and a link
$l_0 \in L$, such that under uniform power, $L$ is feasible while
$a_0(L) = \Omega(\log n)$.
\label{lem:hi_out_aff}
\end{lemma}

\begin{proof}
We form the set $L = \{l_0, l_1, \ldots, l_{n-1}\}$ as follows.
The sender $s_i$ of link $l_i$ is positioned at coordinate $d(s_0, s_i) = c \cdot
i^{1/\alpha} \cdot 2^i$, where $c > 1$ is a constant to be determined.
The length of the link $l_i$ is $\ell_i = 2^i$ and the receiver $r_i$ is positioned at $r_i = s_i + \ell_i = (c\cdot i^{1/\alpha} + 1)2^i$.

Then,
\begin{align*}
a_0(L) & = \sum_{i=1}^{n-1} \left( \frac{\ell_i}{d_{0i}} \right)^\alpha  \\
 & = \sum_{i=1}^{n-1} \left( \frac{2^i}{(c\cdot i^{1/\alpha}+1) 2^i} \right)^\alpha  \\
 & < \frac{1}{(2c)^\alpha} \sum_{i=1}^{n-1} \frac{1}{i} \\
 & = \Omega(\log n)\ .
\end{align*}

To show feasibility, we first bound distances between links by:
\begin{align*}
d_{i-1,i} & = d_{0i} - d(s_0,s_{i-1}) \\
 & = (c\cdot i^{1/\alpha} + 1)2^i - c(i-1)^{1/\alpha} 2^{i-1} \\
 & > c \cdot i^{1/\alpha} 2^{i-1}\ ,
\end{align*}
and for $m > 0$,
\begin{align*}
  d_{i+m,i} & =  d(s_0, s_{i+m}) - d(s_0,r_i) \\
   & > c(i+m)^{1/\alpha} 2^{i+m} - (c\cdot i^{1/\alpha}+1)2^i \\
   & > ci^{1/\alpha} 2^{i+m} - (c\cdot i^{1/\alpha}+1)2^{i+m-1} \\
   & = (2c \cdot i^{1/\alpha} - (c\cdot i^{1/\alpha}+1))2^{i+m-1} \\
   & \ge (c-1) 2^{i+m-1} \ . 
\end{align*}
We then bound the in-affectance of each link by
\begin{align*}
 a_{L}(i) & = \sum_{k, k < i} a_k(i) + \sum_{k, k > i} a_k(i) \\
   & \le i \cdot a_{i-1}(i) + \sum_{m=1}^{n-i-1}
   \left(\frac{\ell_i}{d_{i+m,i}}\right)^\alpha \\
   & \le i \cdot \left(\frac{\ell_i}{d_{i-1,i}}\right)^\alpha
    + \sum_{m=1}^{n-i} \left(\frac{2^i}{(c-1)\cdot 2^{i+m-1}} \right)^\alpha \\
   & \le i \cdot \left(\frac{2^i}{c \cdot i^{1/\alpha}\cdot 2^{i-1}} \right)^\alpha
    + \frac{1}{(c-1)^\alpha} \sum_{m=0} \left(\frac{1}{2^{\alpha}} \right)^m \\
   & = \frac{2}{c^\alpha} + \frac{1}{(c-1)^\alpha} \cdot \frac{2^\alpha}{2^\alpha-1}\ .
\end{align*}
Thus, when $c \ge 1 + \left(3\left(1+\frac{1}{2^\alpha-1}\right)\right)^{1/\alpha}$, it holds that $a_L(i) \le 1$ for
each link $l_i$, i.e., $L$ is feasible. \qed
\end{proof}

We now turn to proving Thm.~\ref{thm:alg-bara-lb}.
We construct the set $\hat{L}$ that satisfies the
claim of the theorem. Let $L$ be the set
feasible under uniform power and
the link $l_0 = (s_0, r_0) \in L$ with $a_L(l_0) = \Omega(\log n)$,
promised by Lemma ~\ref{lem:hi_out_aff}.
Let $L_1$ denote $t$ isometric copies of $L$ with links in the same position 
as $L$.

We next take an arbitrary set $S$ on $n$ links that is feasible under uniform power, and scale its distances so that maximum pairwise distance is $o(1)$. 
For instance, we can let $S = \{l'_1, l'_2, \ldots, l'_n\}$ be $\hat{L}$ scaled by a factor of $4^{-n}$ so that the length of $l'_i=(s'_i,r'_i)$ is $2^{i-2n}$,
$s'_i = c i^{1/\alpha} 2^{i-2n}$, and $r'_i = s'_i + 2^{i-2n}$. By the same argument as Lemma 6, $S$ is feasible; observe also that pairwise distances of points within $S$ are $O(n/2^n)=o(1)$.
Let $L_2$ denote $t$ copies of $S$, with the same coordinates;
thus, the nodes of $L_2$ are all close to the node $s_0$ in $L_1$.
Finally, we form the combined instance $\hat{L} = L_1 \cup L_2$ with a total of $\hat{n} = 2t n$ links.

Observe that for every $l_j \in L_2$ and $l_i \in L_1$ ($i > 0$), that $d_{j i} = d_{0i} (1+o(1))$.
Since we use uniform power, it holds for each of the $t n$ links $l_j \in L_2$ that
\[ a_{j}(L_1) = \Theta(a_0(L_1)) = \Theta(t \cdot a_0(L)) = \Omega(t \log n)\ . \] 
Thus,
\[ a_{\hat{L}}(\hat{L}) \geq a_{L_2}(L_1) = |L_2| \Omega(t \log n)\ , \]
implying that
\[ \maa \geq \frac{1}{|\hat{L}|} a_{\hat{L}}(\hat{L}) = \Omega(t \log (\hat{n}/t))\ . \]
On the other hand, the set $\hat{L}$ clearly has a scheduling number of $2t$, as it is formed by $2t$ feasible sets.
Hence, the theorem.

\section{Conclusions}

We have given a distributed scheduling algorithm that is $O(\log
n)$-approximate in the scheduling model, and shown this factor cannot
be improved in general. Our lower bound construction, however, applies
only to instances with small scheduling number. 

A similar randomized scheduling algorithm was shown by Fangh\"anel et al.\ \cite{FKV09}
to yield an asymptotic constant-factor approximation for the case of
linear power assignment. One key difference is that in the case of linear
power, all links have low affectance ($O(\chi)$), while for general
sublinear length-monotone power assignments this only holds on average.

It remains an important and intriguing open question whether a better asymptotic
approximation ratio can be obtained.

\subsection*{Acknowledgement}
We thank Marijke Bodlaender for helpful discussions leading to the derivation of Lemma \ref{lem:hi_out_aff}.

\bibliographystyle{spmpsci}      
\bibliography{references}   

\begin{thebibliography}{10}
\providecommand{\url}[1]{{#1}}
\providecommand{\urlprefix}{URL }
\expandafter\ifx\csname urlstyle\endcsname\relax
  \providecommand{\doi}[1]{DOI~\discretionary{}{}{}#1}\else
  \providecommand{\doi}{DOI~\discretionary{}{}{}\begingroup
  \urlstyle{rm}\Url}\fi

\bibitem{AndrewsD09}
Andrews, M., Dinitz, M.: Maximizing capacity in arbitrary wireless networks in
  the {S}{I}{N}{R} model: Complexity and game theory.
\newblock In: INFOCOM, pp. 1332--1340. IEEE (2009)

\bibitem{GT2011}
\'Asgeirsson, E.I., Mitra, P.: On a game theoretic approach to capacity
  maximization in wireless networks.
\newblock In: INFOCOM (2011)

\bibitem{PODC13}
Bodlaender, M.H., Halld\'{o}rsson, M.M., Mitra, P.: {Connectivity and
  Aggregation in Multihop Wireless Networks}.
\newblock In: PODC (2013)

\bibitem{chafekar07}
Chafekar, D., Kumar, V., Marathe, M., Parthasarathy, S., Srinivasan, A.:
  {Cross-layer Latency Minimization for Wireless Networks using SINR
  Constraints}.
\newblock In: Mobihoc (2007)

\bibitem{Dinitz2010}
Dinitz, M.: Distributed algorithms for approximating wireless network capacity.
\newblock In: INFOCOM (2010)

\bibitem{FKRV09}
Fangh\"anel, A., Kesselheim, T., R\"acke, H., V\"ocking, B.: Oblivious
  interference scheduling.
\newblock In: PODC, pp. 220--229 (2009)

\bibitem{FKV09}
Fangh\"anel, A., Kesselheim, T., V\"ocking, B.: Improved algorithms for latency
  minimization in wireless networks.
\newblock In: ICALP, pp. 447--458 (2009)

\bibitem{GHW13}
Goussevskaia, O., Halld\'{o}rsson, M.M., Wattenhofer, R.: {Algorithms for
  wireless capacity}.
\newblock IEEE/ACM Transactions on Networking  (2013).
\newblock To appear

\bibitem{GHWW09}
Goussevskaia, O., Halld\'{o}rsson, M.M., Wattenhofer, R., Welzl, E.: {Capacity
  of Arbitrary Wireless Networks}.
\newblock In: INFOCOM, pp. 1872--1880 (2009)

\bibitem{gouss2007}
Goussevskaia, O., Oswald, Y.A., Wattenhofer, R.: {Complexity in Geometric
  SINR}.
\newblock In: {Mobihoc}, pp. 100--109 (2007)

\bibitem{kumar00}
Gupta, P., Kumar, P.R.: {The Capacity of Wireless Networks}.
\newblock IEEE Trans. Information Theory \textbf{46}(2), 388--404 (2000)

\bibitem{us:talg12}
Halld\'{o}rsson, M.M.: Wireless scheduling with power control.
\newblock ACM Transactions on Algorithms \textbf{9}(1), 7 (2012)

\bibitem{us:SODA13}
Halld\'orsson, M.M., Holzer, S., Mitra, P., Wattenhofer, R.: The power of
  non-uniform wireless power.
\newblock In: SODA (2013)

\bibitem{SODA11}
Halld\'{o}rsson, M.M., Mitra, P.: {Wireless Capacity with Oblivious Power in
  General Metrics}.
\newblock In: SODA (2011)

\bibitem{PODC12}
Halld\'{o}rsson, M.M., Mitra, P.: Distributed connectivity of wireless
  networks.
\newblock In: PODC (2012)

\bibitem{us:INFOCOM12}
Halld\'{o}rsson, M.M., Mitra, P.: Wireless capacity and admission control in
  cognitive radio.
\newblock In: INFOCOM, pp. 855 -- 863 (2012)

\bibitem{HW09}
Halld\'{o}rsson, M.M., Wattenhofer, R.: {Wireless Communication is in APX}.
\newblock In: ICALP, pp. 525--536 (2009)

\bibitem{KesselheimSoda11}
Kesselheim, T.: {A Constant-Factor Approximation for Wireless Capacity
  Maximization with Power Control in the {S}{I}{N}{R} Model}.
\newblock In: SODA (2011)

\bibitem{KesselheimThesis}
Kesselheim, T.: Approximation algorithms for spectrum allocation and power
  control in wireless networks.
\newblock Ph.D. thesis, RWTH Aachen University, Aachen, Germany (2012)

\bibitem{KV10}
Kesselheim, T., V\"ocking, B.: Distributed contention resolution in wireless
  networks.
\newblock In: DISC, pp. 163--178 (2010)

\bibitem{MaheshwariJD08}
Maheshwari, R., Jain, S., Das, S.R.: A measurement study of interference
  modeling and scheduling in low-power wireless networks.
\newblock In: SenSys, pp. 141--154 (2008)

\bibitem{MoWa06}
Moscibroda, T., Wattenhofer, R.: {The Complexity of Connectivity in Wireless
  Networks}.
\newblock In: INFOCOM (2006)

\bibitem{Moscibroda2006Protocol}
Moscibroda, T., Wattenhofer, R., Weber, Y.: {Protocol Design Beyond Graph-Based
  Models}.
\newblock In: {Hotnets} (2006)

\bibitem{pei2012}
Pei, G., Kumar, V.A.: Distributed link scheduling under the physical
  interference model.
\newblock In: INFOCOM (2012)

\bibitem{Tanenbaum:2002:CN:572404}
Tanenbaum, A.: Computer Networks, 4th edn.
\newblock Prentice Hall Professional Technical Reference (2002)

\bibitem{tonoyanlinear}
Tonoyan, T.: On the problem of wireless scheduling with linear power levels.
\newblock CoRR \textbf{abs/1107.4981} (2011)

\end{thebibliography}

\appendix
\section*{Affectance Reduction}

The following is given (with minor modification) in
\cite[Theorem 4.1]{GHW13}.

\begin{theorem}
\label{anti-signal-strengthening}
Let $S$ be an anti-feasible set and $p < 2$ be a value.
Then, $S$ can be partitioned into $t = \left(\frac{4}{p}\right)^2$ sets $S_1, S_2, \ldots, S_t$, each satisfying $a_v(S_i) \leq p$, for every $l_v \in S_i$.
\end{theorem}
\begin{proof}
We first partition $S$ into a sequence $T_1, T_2, \ldots$ of sets as follows. 
Order
the links in $S$ in decreasing order. For each link $l_v$, assign
$l_v$ to the first set $T_j$ for which $a_{v}(T_j) \le
p/2$, i.e.\ the accumulated affectance of $l_v$ on the
previous, longer links in $T_j$ is at most $p/2$. Since each link
$l_v$ originally had out-affectance at most $2$, then by the
additivity of affectance, the number of sets used is at most $\lceil
\frac{2}{p/2} \rceil = \lceil \frac{4}{p} \rceil$.

We then repeat the same approach on each of the sets $T_i$,
processing the links this time in increasing order.
The number of sets is again $\lceil \frac{4}{p} \rceil$ for each $T_i$, or
$\lceil \frac{4}{p} \rceil^2$ in total.
In each final slot (set), the affectance of a link on the shorter links in the
same slot is at most $p/2$. In total, then, the out-affectance of each
link is at most $2 \cdot p/2 = p$.
\end{proof}

\end{document}